\pgfplotsset{compat=1.3}
\newcommand{\vect}[1]{\boldsymbol{\mathrm{#1}}}
\newcommand{\mat}[1]{\boldsymbol{\mathrm{#1}}}
\theoremstyle{definition}
\newtheorem{theorem}{Theorem}
\newtheorem{proposition}{Proposition}
\newtheorem{lemma}{Lemma}
\newtheorem{corollary}{Corollary}
\begin{document}

\title{Optimal Training Design for Over-the-Air Polynomial Power Amplifier Model Estimation\vspace{-0.5em}}

\author{\IEEEauthorblockN{\IEEEauthorrefmark{1}Fran\c{c}ois Rottenberg, \IEEEauthorrefmark{1}Thomas Feys, \IEEEauthorrefmark{2}Nuutti Tervo
	}
	\IEEEauthorblockA{\IEEEauthorrefmark{1}KU Leuven, ESAT-WaveCore, Ghent Technology Campus, 9000 Ghent, Belgium
	}
	\IEEEauthorblockA{\IEEEauthorrefmark{2}Centre for Wireless Communications - Radio Technologies (CWC-RT), University of Oulu, Oulu FI-90014, Finland\vspace{-1.5em}
	}
}

\maketitle

\begin{abstract}
	The current evolution towards a massive number of antennas and a large variety of transceiver architectures forces to revisit the conventional techniques used to improve the fundamental \gls{pa} linearity-efficiency trade-off. Most of the digital linearization techniques rely on \gls{pa} measurements using a dedicated feedback receiver. However, in modern systems with large amount of RF chains and high carrier frequency, dedicated receiver per RF chain is costly and complex to implement. This issue can be addressed by measuring PAs over the air, but in that case, this extra signalling is sharing resources with the actual data transmission. In this paper, we look at the problem from an estimation theory point of view so as to minimize pilot overhead while optimizing estimation performance. We show that conventional results in the mathematical statistics community can be used. We find the \gls{ls} optimal training design, minimizing the maximal \gls{mse} of the reconstructed \gls{pa} response over its whole input range. As compared to uniform training, simulations demonstrate a factor 10 reduction of the maximal \gls{mse} for a $L=7$ PA polynomial order. Using prior information, the LMMSE estimator can achieve an additional gain of a factor up to 300 at low \gls{snr}.
\end{abstract}

\begin{IEEEkeywords}
	Optimal training, power amplifier, calibration.
\end{IEEEkeywords}

\section{Introduction}\label{section:Introduction}

\glsresetall
\bstctlcite{IEEEexample:BSTcontrol}


The \glspl{pa} account for a large part of energy consumption of base stations~\cite{auer11}. Their operating regime faces a fundamental trade-off between linearity and efficiency~\cite{lavr10}. While linearity is desirable to avoid signal degradation and out-of-band emission, driving the \gls{pa} closer to saturation improves its efficiency while the signal suffers from more nonlinearity
. This is particularly challenging in latest broadband technologies where the combination of multicarrier transmission and precoding over to multiple users, leads to a high peak-to-average power ratio, requiring a large linear range of the \gls{pa}~\cite{Fager2019}.

\subsection{State-of-the-Art}

A widely used solution to improve the linearity-efficiency trade-off is the use of \gls{dpd} before the \gls{pa} to linearize its response and allow to drive it closer to saturation~\cite{cripps2006rf}. A recently proposed solution is the use of distortion-aware precoding such as the Z3RO precoder~\cite{Z3RO_journal,10279810}. These techniques require the \gls{pa} response to be properly estimated. A conventional approach is to use a feedback loop after the \gls{pa} within the transmitter chain~\cite{huang_signal_2014}. Unfortunately, in massive MIMO and/or mm-wave systems, this is not always possible or desirable. Replicating such a feedback mechanism per RF chain is expensive in terms of hardware. 
An interesting alternative is to train the \gls{dpd} using the \gls{ota} wireless link with either near-field probes~\cite{9780639}, far-field observation antennas or the entire radio link~\cite{8877285,tervo_thesis}. As this training signalling uses the wireless channel, it is important to minimize the signalling overhead to avoid penalizing data transmission.


To the best of our knowledge, despite this rich literature on array linearization, there is a lack of an estimation-theoretical approach of optimal training design to estimate \gls{pa} model parameters. Moreover, when using a dedicated feedback loop directly after the \gls{pa}, observations are typically obtained continuously at a very high \gls{snr} and noise is generally neglected~\cite{huang_signal_2014}. In contrast, \gls{ota} training suffers from channel attenuation, multipath propagation, and limited \gls{snr} due to the lower received signal strength. Hence, the training design should be optimized to maximize performance while minimizing the signalling overhead.

\subsection{Contributions}

In this paper, we formalize the noisy \gls{ota} estimation problem and rederive the \gls{ls} and \gls{lmmse} estimators for a nonlinear memoryless polynomial \gls{pa} model. We derive the LS optimal training in the sense of minimizing the generalized variance (determinant) of the error covariance matrix. To do this, we show that the problem can be viewed as a D-optimal design, well known in mathematical statistics. Interestingly, the optimal design also minimizes the maximal reconstruction \gls{mse} when predicting the \gls{pa} response over the entire PA input range. The optimal training design can also be used for the LMMSE estimator but is only optimal at high SNR. Simulation results demonstrate the superiority of the optimal design for the LS estimator. We also show, that if enough prior information is available on the PA response, the LMMSE estimator can provide a significant gain at low SNR, especially if the phase of the channel is known (coherent case).

{\textbf{Notations}}: Vectors and matrices are by bold lowercase and uppercase letters $\vect{a}$ and $\mat{A}$. Superscripts $^*$, $^T$ and $^H$ stand for conjugate, transpose and Hermitian transpose. The symbols $\mathbb{E}(.)$, $\angle(.)$, $\|\mat{A}\|$ and $|\mat{A}|$ denote the expectation, phase, Frobenius norm and determinant. $\mathcal{N}(\mu,\sigma^2)$ denotes a normal distribution with mean $\mu$ and variance $\sigma^2$. 

\section{Power Amplifier Model Estimation}\label{section:transmission_model}

Without loss of generality, we consider a single \gls{pa} related to a transmit antenna, which response needs to be estimated. To do this, over-the-air calibration is considered based on an observational receiver and a feedback loop. The framework can be straightforwardly extended to multiple transmit antennas by using orthogonal transmit sequences and decoupling the problem per PA. 
A series of $N$ pilots are sent, denoted by $s_n,\ n=0,...,N-1$. The channel is assumed narrowband and constant during training. The received signal is then
\begin{align*}
	r_n=h f(s_n) + w_n
\end{align*}
where $w_n$ is additive noise and $h$ is the complex channel coefficient. The function $f()$ is the PA transfer function to be estimated. It is here considered to follow a nonlinear quasi-memoryless polynomial model 
\begin{align*}
	f(s)&=\sum_{l=1}^{L}\beta_{l}s|s|^{l-1}
\end{align*}
where $L$ is the model order and $\beta_l \in \mathbb{C}$ are the polynomial coefficients to be estimated. One can note the presence of even-order nonlinear terms, which are often neglected when considering bandpass signals. Still, motivated by~\cite{ding_effects_2004}, we include them to improve both modelling accuracy while using low order polynomials, with better numerical properties and also making the following optimization problem more standard. Without further assumptions, the problem has $L+1$ complex unknowns including the $L$ beta coefficients plus the channel coefficient $h$. In practice, channel estimation will be used at the receiver so that the equivalent linear channel gain $h\beta_1$ will be estimated. We are interested in the relation of higher order terms with respect to this one. To solve this ambiguity, we set $h=1$, which we will rediscuss in~Section~\ref{section:simulation_results}. 
We thus have
\begin{align*}
	r_n&=\sum_{l=1}^{L}\beta_{l}s_n|s_n|^{l-1}+w_n.
\end{align*}
In vector form, this gives a linear observation model
\begin{align*}
	\vect{r}&=\mat{\Phi} \vect{\beta} + \vect{w}
\end{align*}
where $\vect{r}=(r_0,...,r_{N-1})^T$, $\vect{\beta}=(\beta_1,...,\beta_{L})^T \in \mathbb{C}^{L\times 1}$, $\vect{w}=(w_0,...,w_{N-1})^T$ and
\begin{align*}
	\mat{\Phi}&=\begin{pmatrix}
		s_0 & \hdots & s_{0}|s_0|^{L-1}\\
		\vdots & \ddots & \vdots \\
		s_{N-1} & \hdots & s_{N-1}|s_{N-1}|^{L-1}
	\end{pmatrix} \in \mathbb{C}^{N\times L}.
\end{align*}
In the literature, other polynomial bases have been considered, e.g., orthogonal bases for given distributions of the input signals~\cite{raich_orthogonal_2004b}. Generally considering another basis $\mat{\Psi} \in \mathbb{C}^{N\times L}$, we can write $\mat{\Psi}=\mat{\Phi}\mat{U}$ where $\mat{U}\in\mathbb{C}^{L \times L}$ is a full rank matrix. We would then have $\vect{r}=\mat{\Psi}\vect{\alpha}+\vect{w}$ where $\vect{\alpha}=\mat{U}\vect{\beta}$.

\subsection{Least Squares Estimator}

The classical \gls{ls} estimate is given by~\cite{kay_fundamentals_1993,huang_signal_2014}
\begin{align*}
	\hat{\vect{\beta}}_{\mathrm{LS}}=\arg \min_{\vect{\beta}} \|\vect{r}-\mat{\Phi}\vect{\beta}\|^2=(\mat{\Phi}^H\mat{\Phi})^{-1}\mat{\Phi}^H\vect{r}.
\end{align*}
We here consider the noise samples $w_n$ identically and independently complex circularly symmetric distributed with zero mean and variance $\sigma^2$ so that the LS estimate corresponds to the \gls{ml} estimate~\cite{kay_fundamentals_1993}. The error covariance matrix is
\begin{align*}
	\mat{C}_{\mathrm{LS}}&=\sigma^2(\mat{\Phi}^H\mat{\Phi})^{-1}.
\end{align*}
We can use the \gls{ls} estimate to predict, \textit{i.e.}, reconstruct, the PA response at $\tilde{N}$ values $\tilde{s}_n,n=0,...,\tilde{N}-1$ as $\tilde{\mat{\Phi}}\hat{\vect{\beta}}_{\mathrm{LS}}$ with 
\begin{align*}
	\tilde{\mat{\Phi}}&=\begin{pmatrix}
		\tilde{s}_0 & \hdots & \tilde{s}_{0}|\tilde{s}_0|^{L-1}\\
		\vdots & \ddots & \vdots \\
		\tilde{s}_{\tilde{N}-1} & \hdots & \tilde{s}_{\tilde{N}-1}|\tilde{s}_{\tilde{N}-1}|^{L-1}
	\end{pmatrix} \in \mathbb{C}^{\tilde{N}\times L}.
\end{align*}
The resulting estimate is unbiased with error covariance matrix
\begin{align}
	\tilde{\mat{C}}_{\mathrm{LS}}&=\sigma^2\tilde{\mat{\Phi}}(\mat{\Phi}^H\mat{\Phi})^{-1}\tilde{\mat{\Phi}}^H. \label{eq:predicted_C_LS}
\end{align}

\subsection{Linear Minimum Mean Squared Estimator}

How to choose the polynomial order $L$? Intuitively, increasing it would only help as it would allow a better fit. Unfortunately, this is not the case for the conventional \gls{ls} estimator due to three reasons. i) It may suffer from numerical instability if a high polynomial order $L$ is considered~\cite{huang_signal_2014,raich_orthogonal_2004b}. In other words, the matrix inverse $(\mat{\Phi}^H\mat{\Phi})^{-1}$ becomes ill conditioned. ii) As will be shown in Corol.~\ref{corol:minimax}, the prediction \gls{mse} grows with $L$. Intuitively, more coefficients need to be estimated leading to less noise averaging. iii) Moreover, the minimum number of pilots $N$ has to grow with $L$, since the matrix $\mat{\Phi}$ needs at least $L$ pilots with different magnitude to be full rank so that matrix $\mat{\Phi}^H\mat{\Phi}$ can be inverted. 

This requirement of $N\geq L$ pilots relates to the fact that the LS estimator gives the same importance to all polynomial coefficients $\beta_l$. In practice though, some, especially higher order ones, may be negligible. This knowledge can help to regularize the problem inversion (even if $N\leq L$). To do this, we leverage the prior information we have about the \gls{pa} transfer function. Indeed, while its exact form is not known, its general behaviour can be generally assumed to be known, \textit{e.g.}, its linear gain or saturation level. A \gls{lmmse} provides an attractive solution, which only requires the first and second-order statistical moments. We now assume that the vector $\vect{\beta}$ is random with mean $\bar{\vect{\beta}}$ and covariance matrix $\mat{C}_{\vect{\beta}}$. In Section~\ref{section:simulation_results}, we discuss two cases of prior information depending if the phase of the channel gain $h$ is known (coherent) or unknown (noncoherent). 
The LMMSE estimator is then
\begin{align*}
	\hat{\vect{\beta}}_{\mathrm{LMMSE}}&=\bar{\vect{\beta}}+(\mat{\Phi}^H\mat{\Phi}+\sigma^2\mat{C}_{\vect{\beta}}^{-1})^{-1}\mat{\Phi}^H(\vect{r}-\mat{\Phi}\bar{\vect{\beta}})
\end{align*}
and the error covariance matrix is
\begin{align*}
	\mat{C}_{\mathrm{LMMSE}}&=\sigma^2(\mat{\Phi}^H\mat{\Phi}+\sigma^2\mat{C}_{\vect{\beta}}^{-1})^{-1}.
\end{align*}
Note that the matrix to invert is always full rank, even if $N \leq L$. Moreover, given the positive definite nature of $\mat{C}_{\vect{\beta}}$ we have $\mat{C}_{\mathrm{LS}}\succeq\mat{C}_{\mathrm{LMMSE}}$, which implies better performance of the LMMSE estimator. Again, we can predict the PA response at $\tilde{N}$ values by $\tilde{\mat{\Phi}}\hat{\vect{\beta}}_{\mathrm{LMMSE}}$ with an error covariance matrix
\begin{align*}
	\tilde{\mat{C}}_{\mathrm{LMMSE}}&=\sigma^2\tilde{\mat{\Phi}}(\mat{\Phi}^H\mat{\Phi}+\sigma^2\mat{C}_{\vect{\beta}}^{-1})^{-1}\tilde{\mat{\Phi}}^H.
\end{align*}

\begin{proposition}\label{prop:all_bases_are_great}
	The basis choice has no impact on the prediction performance related to the LS and LMMSE estimators, \textit{i.e.}, $\tilde{\mat{C}}_{\mathrm{LS}}$ and $\tilde{\mat{C}}_{\mathrm{LMMSE}}$ do not depend on the basis $\mat{\vect{\Psi}}$.
\end{proposition}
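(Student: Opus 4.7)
The plan is to substitute $\mat{\Psi}=\mat{\Phi}\mat{U}$ (and hence $\tilde{\mat{\Psi}}=\tilde{\mat{\Phi}}\mat{U}$) into the basis-$\mat{\Psi}$ analogues of the two prediction error covariance expressions, and then use the identity $(\mat{U}^H\mat{M}\mat{U})^{-1}=\mat{U}^{-1}\mat{M}^{-1}\mat{U}^{-H}$, valid since $\mat{U}$ is full rank, to cancel the flanking $\mat{U}$ and $\mat{U}^H$ factors. This collapses the new-basis expressions back to the $\mat{\Phi}$-basis ones, namely \eqref{eq:predicted_C_LS} and its LMMSE analogue.

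For LS, the substitution gives
\begin{align*}
\tilde{\mat{C}}_{\mathrm{LS}}^{(\mat{\Psi})}
&= \sigma^{2}\, \tilde{\mat{\Phi}}\mat{U}\, \bigl(\mat{U}^H \mat{\Phi}^H \mat{\Phi}\, \mat{U}\bigr)^{-1} \mat{U}^H \tilde{\mat{\Phi}}^H,
\end{align*}
which the identity collapses to $\sigma^{2}\tilde{\mat{\Phi}}(\mat{\Phi}^H\mat{\Phi})^{-1}\tilde{\mat{\Phi}}^H = \tilde{\mat{C}}_{\mathrm{LS}}$. Equivalently, one observes that $\hat{\vect{\alpha}}_{\mathrm{LS}}=\mat{U}^{-1}\hat{\vect{\beta}}_{\mathrm{LS}}$, so the predicted response $\tilde{\mat{\Psi}}\hat{\vect{\alpha}}_{\mathrm{LS}}$ coincides with $\tilde{\mat{\Phi}}\hat{\vect{\beta}}_{\mathrm{LS}}$ as a random vector, and the covariances necessarily agree.

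The LMMSE case needs one extra step: translating the prior second moments. Since $\mat{\Phi}\vect{\beta}=\mat{\Psi}\vect{\alpha}$ together with $\mat{\Psi}=\mat{\Phi}\mat{U}$ forces $\vect{\beta}=\mat{U}\vect{\alpha}$, the prior covariance in the new basis is $\mat{C}_{\vect{\alpha}}=\mat{U}^{-1}\mat{C}_{\vect{\beta}}\mat{U}^{-H}$, whose inverse equals $\mat{U}^H\mat{C}_{\vect{\beta}}^{-1}\mat{U}$. Substituting this into $\tilde{\mat{C}}_{\mathrm{LMMSE}}^{(\mat{\Psi})}$ produces $\mat{U}^H(\mat{\Phi}^H\mat{\Phi}+\sigma^{2}\mat{C}_{\vect{\beta}}^{-1})\mat{U}$ inside the inverse, after which the same cancellation yields $\tilde{\mat{C}}_{\mathrm{LMMSE}}$. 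There is no real obstacle here — the statement amounts to the standard fact that LS/LMMSE predictors are equivariant under invertible linear reparameterization while the predictions themselves are invariant. The only point that requires care is keeping the direction of the change of variables consistent so that the prior statistics are transformed correctly before the LMMSE formula is invoked in the $\mat{\Psi}$-basis.
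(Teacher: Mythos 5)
Your proof is correct and takes essentially the same route as the paper: substitute $\mat{\Psi}=\mat{\Phi}\mat{U}$, $\tilde{\mat{\Psi}}=\tilde{\mat{\Phi}}\mat{U}$ into the prediction error covariances and cancel the flanking $\mat{U}$, $\mat{U}^H$ factors through the inverse, with the paper proving the LMMSE case and recovering LS via $\mat{C}_{\vect{\beta}}^{-1}=\mat{0}$. Your bookkeeping of the reparameterization is in fact the more careful one: consistency of $\mat{\Phi}\vect{\beta}=\mat{\Psi}\vect{\alpha}$ with $\mat{\Psi}=\mat{\Phi}\mat{U}$ forces $\vect{\beta}=\mat{U}\vect{\alpha}$ and hence $\mat{C}_{\vect{\alpha}}^{-1}=\mat{U}^H\mat{C}_{\vect{\beta}}^{-1}\mat{U}$, which is exactly what the paper's final cancellation uses even though its text writes $\vect{\alpha}=\mat{U}\vect{\beta}$.
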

\begin{proof}
	We here conduct the proof in the LMMSE case. The LS case can be found by particularization to the case $\mat{C}_{\vect{\beta}}^{-1}=\mat{0}$ (infinite variance of a priori information). We can redo the previous derivation for a general basis $\mat{\Psi}$. We then have $\mat{\Psi}=\mat{\Phi}\mat{U}$ and $\tilde{\mat{\Psi}}=\tilde{\mat{\Phi}}\mat{U}$. Since $\vect{\alpha}=\mat{U}\vect{\beta}$, the random vector $\vect{\alpha}$ has a covariance matrix $\mat{C}_{\vect{\alpha}}=\mat{U}\mat{C}_{\vect{\beta}}\mat{U}^H$. The error covariance matrix then becomes
	\begin{align*}
		\tilde{\mat{C}}_{\mathrm{LMMSE}}&=\sigma^2\tilde{\mat{\Psi}}(\mat{\Psi}^H\mat{\Psi}+\sigma^2\mat{C}_{\vect{\alpha}}^{-1})^{-1}\tilde{\mat{\Psi}}^H\\
		&=\sigma^2\tilde{\mat{\Phi}}\mat{U}(\mat{U}^H\mat{\Phi}^H\mat{\Phi}\mat{U}+\sigma^2\mat{C}_{\vect{\alpha}}^{-1})^{-1}\mat{U}^H\tilde{\mat{\Phi}}^H\\
		&=\sigma^2\tilde{\mat{\Phi}}(\mat{\Phi}^H\mat{\Phi}+\sigma^2\mat{C}_{\vect{\beta}}^{-1})^{-1}\tilde{\mat{\Phi}}^H
	\end{align*}
	where we used the fact that the square matrix $\mat{U}$ is full rank and is thus invertible. This completes the proof.
\end{proof}

\section{Optimal Training Design}\label{subsection:zero_third_order_distortion}

We now consider the optimal design of the pilot symbols $s_n$, $n=0,...,N-1$ or equivalently the matrix $\mat{\Phi}$. As a practical constraint, we consider that the \gls{pa} input power can be limited by a max level. Therefore, we add the constraint $|s_n| \leq 1$, \textit{i.e.}, a unit power constraint which can be generalized straightforwardly to any power $P_{\mathrm{max}}$ by scaling training points $s_n$ by $\sqrt{P_{\mathrm{max}}}$.

In theoretical infinite precision, Prop.~\ref{prop:all_bases_are_great} shows that all bases give same performance. Hence, the basis choice has no impact on the optimal training design minimizing $\tilde{\mat{C}}_{\mathrm{LS}/\mathrm{LMMSE}}$. In practice, when implementing the estimators in finite precision, some bases will perform better and the conventional $\mat{\Phi}$ should be avoided. Indeed, its high order terms $|s|^l$ go to zero fast (as $O(|s|^l)$) as $|s|\rightarrow 0$ while other bases can keep a decay in $O(|s|)$, giving them increased robustness (reduced impact of quantization error)~\cite{raich_orthogonal_2004b}. We now give a lemma that will prove useful for optimal design.

\begin{lemma} \label{lemma:arbitrary_phase}
	The phases of the pilot signals $\angle s_n$ have no impact on the error covariance matrices ${\mat{C}}_{\mathrm{LS}/\mathrm{LMMSE}}$ and $\tilde{\mat{C}}_{\mathrm{LS}/\mathrm{LMMSE}}$. Hence, they can be chosen arbitrarily.
\end{lemma}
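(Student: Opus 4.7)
The plan is to factor the phases out of the regression matrix $\mat{\Phi}$ and show that they cancel in every quadratic form of the type $\mat{\Phi}^H\mat{\Phi}$ that governs the error covariance matrices.

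First, I would write each pilot in polar form $s_n = |s_n| e^{j\theta_n}$, so that the $(n,l)$ entry of $\mat{\Phi}$ becomes
\begin{align*}
s_n |s_n|^{l-1} = |s_n|^l\, e^{j\theta_n}.
\end{align*}
The factor $e^{j\theta_n}$ depends only on the row index, so I can pull it out on the left: $\mat{\Phi} = \mat{D}\,\mat{\Phi}_0$, where $\mat{D} = \diag(e^{j\theta_0},\ldots,e^{j\theta_{N-1}})$ is unitary and $\mat{\Phi}_0$ depends solely on the magnitudes $|s_n|$. Since $\mat{D}^H\mat{D} = \mat{I}_N$, it follows immediately that
\begin{align*}
\mat{\Phi}^H\mat{\Phi} = \mat{\Phi}_0^H\mat{D}^H\mat{D}\,\mat{\Phi}_0 = \mat{\Phi}_0^H\mat{\Phi}_0,
\end{align*}
which is completely independent of the phases $\theta_n$.

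Next I would propagate this observation through each of the four error covariance expressions derived earlier. The LS estimation covariance $\mat{C}_{\mathrm{LS}} = \sigma^2(\mat{\Phi}^H\mat{\Phi})^{-1}$ and the LMMSE covariance $\mat{C}_{\mathrm{LMMSE}} = \sigma^2(\mat{\Phi}^H\mat{\Phi}+\sigma^2\mat{C}_{\vect{\beta}}^{-1})^{-1}$ both involve $\mat{\Phi}$ only through $\mat{\Phi}^H\mat{\Phi}$, so they inherit the phase-independence directly. For the prediction covariances $\tilde{\mat{C}}_{\mathrm{LS}}$ and $\tilde{\mat{C}}_{\mathrm{LMMSE}}$, the matrix $\tilde{\mat{\Phi}}$ is built from the prediction points $\tilde{s}_n$, which are unrelated to the training phases, while the middle factor is again of the form $(\mat{\Phi}^H\mat{\Phi} + \ldots)^{-1}$; so only the training phases disappear, which is exactly what the lemma claims.

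I do not expect any real obstacle beyond keeping the bookkeeping clean and making explicit that the diagonal phase matrix $\mat{D}$ is unitary so that it cancels in the Hermitian product. The one point that deserves an explicit remark is that the lemma concerns the phases of the pilots $s_n$ only and does not touch the phases of the prediction points $\tilde{s}_n$, so the factorization $\tilde{\mat{\Phi}} = \tilde{\mat{D}}\tilde{\mat{\Phi}}_0$ is not used and $\tilde{\mat{\Phi}}$ is kept intact throughout.
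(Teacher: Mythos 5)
Your proof is correct and follows essentially the same route as the paper: both arguments reduce to the observation that all four covariance matrices involve the training matrix only through the Gram matrix $\mat{\Phi}^H\mat{\Phi}$, which is phase-independent. The paper exhibits this by writing out the entries as sums of powers of $|s_n|$, while you obtain the same cancellation via the unitary diagonal factorization $\mat{\Phi}=\mat{D}\mat{\Phi}_0$ — a cosmetic difference only.
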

\begin{proof}
	The error covariance matrices depend on the matrix
	\begin{align*}
			\mat{\Phi}^H\mat{\Phi}=\begin{pmatrix}
					\sum_n |s_n|^2 & \hdots & \sum_n |s_n|^{L+1}\\
					\vdots & \ddots & \vdots \\
					\sum_n |s_n|^{L+1} & \hdots & \sum_n |s_n|^{2L}
				\end{pmatrix}
		\end{align*}
	which only depends on pilot amplitudes, not phases.
\end{proof}

\subsection{Least Squares Estimator}\label{subsection:training_design_LS}

We now consider the design of $\mat{\Phi}$ to minimize the generalized variance, \textit{i.e.}, the determinant, of the LS estimate error covariance matrix
\begin{align}
	\min_{\substack{s_0,...,s_{N-1} \\0\leq |s_n| \leq 1,\ \forall n}} |\mat{C}_{\mathrm{LS}}|=\sigma^{2L}|(\mat{\Phi}^H\mat{\Phi})^{-1}|. \label{eq:opt_prob}
\end{align}

\begin{theorem}\label{theor:main}
	If $N$ is a multiple of $L$, the optimal training design that solves~(\ref{eq:opt_prob}) divides the $N$ pilots $s_n$ in $L$ sets of $N/L$ pilots. The amplitude of the pilots in each set corresponds to one of the $L$ support points (roots) $t_l$ that solve
	\begin{align}
		(1-t)P_L'(2t-1)=0 \label{eq:root_prob}
	\end{align}
	where $P_L'(t)$ is the derivative of the $L$-th degree Legendre polynomial $P_L(t)$.\footnote{$P_L(2t-1)$ is the shifted Legendre polynomial to the interval $[0,1]$.} The phase of the pilots can be chosen arbitrarily.
\end{theorem}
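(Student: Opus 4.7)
The plan is to recognize the problem, after stripping the phases via Lemma~\ref{lemma:arbitrary_phase}, as the classical D-optimal experimental design for polynomial regression through the origin on $[0,1]$. Setting $\tau_n = |s_n| \in [0,1]$ and introducing the empirical design measure $\mu = N^{-1}\sum_n \delta_{\tau_n}$, I would write $\frac{1}{N}\mat{\Phi}^H\mat{\Phi} = M(\mu) := \int f(\tau) f(\tau)^T d\mu(\tau)$ with $f(\tau) = (\tau,\tau^2,\dots,\tau^L)^T$. Minimizing $|\mat{C}_{\mathrm{LS}}|$ is then equivalent to maximizing $|M(\mu)|$ over probability measures on $[0,1]$---a standard D-optimal design problem. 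An $L$-point equally weighted design is the minimum-support candidate, with equal weights $1/L$ following from Caratheodory together with the fact that, for fixed support, $|M|$ factors as a Vandermonde-like determinant times $\prod_i w_i$ subject to $\sum_i w_i = 1$.

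Next, I would invoke the Kiefer--Wolfowitz equivalence theorem: $\mu^*$ is D-optimal iff the sensitivity function $d(\tau;\mu^*) = f(\tau)^T M(\mu^*)^{-1} f(\tau)$ satisfies $d(\tau;\mu^*) \leq L$ on $[0,1]$, with equality on the support. Since $d$ is a degree-$2L$ polynomial with $d(0)=0$, the polynomial $L - d(\tau;\mu^*)$ has degree $2L$, constant term $L$, and is nonnegative on $[0,1]$. For the candidate design on the zeros of $(1-\tau)P_L'(2\tau-1)$, the $L-1$ interior support points must be double zeros of $L-d$ and $\tau = 1$ must be a simple zero, accounting for $2L-1$ roots in $[0,1]$; the remaining root lies outside $[0,1]$, giving the factorization
\begin{align*}
L - d(\tau;\mu^*) = c\,(\tau-a)(1-\tau)\,P_L'(2\tau-1)^2
\end{align*}
for some $a<0$ and $c>0$. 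I would then verify this by constructing the right-hand side and checking, using the orthogonality and the differential identity of the shifted Legendre polynomial $P_L(2\tau-1)$ on $[0,1]$, that the associated $d(\tau)$ is indeed of the form $f(\tau)^T M(\mu^*)^{-1} f(\tau)$. The constants $a$ and $c$ are then pinned down by $d(0)=0$ and by the moment equations defining $M(\mu^*)^{-1}$.

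Finally, since $N$ is a multiple of $L$, the continuous optimum $\mu^*$ is realized exactly by placing $N/L$ pilots at each of the $L$ support points, and Lemma~\ref{lemma:arbitrary_phase} leaves the phases free. The main obstacle I anticipate is the algebraic step in the second paragraph: either one inverts the $L\times L$ moment matrix at the candidate Legendre nodes (tedious but direct), or---the cleaner route I would prefer---one appeals to the classical Guest--Hoel/Kiefer D-optimality result for polynomial regression through origin, whose proof rests on precisely the shifted-Legendre orthogonality just mentioned and bypasses the explicit matrix inversion.
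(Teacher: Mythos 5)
Your proposal is correct and follows essentially the same route as the paper: strip the pilot phases via Lemma~\ref{lemma:arbitrary_phase}, recast the problem as a classical D-optimal design for polynomial regression without intercept on $[0,1]$, and appeal to the known characterization of its solution (the paper simply cites \cite[Theor.~1]{huang1995d}, which is the Guest--Hoel/Kiefer-type result you mention as your preferred ``cleaner route''). The only difference is that you additionally sketch how that cited result is proved (equal weights on an $L$-point support plus the Kiefer--Wolfowitz equivalence theorem with the factorization $L-d(\tau)\propto(\tau-a)(1-\tau)P_L'(2\tau-1)^2$), detail the paper delegates entirely to the reference; your sketch is standard and sound, modulo the algebraic verification you yourself flag.
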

\begin{proof}
	In light of Lemma~\ref{lemma:arbitrary_phase}, we can simplify the optimization problem by restricting $s_n$ to be fully real and positive. After optimization, a phase can be reinserted if desired, without impacting performance. The fully real problem then has the form of a standard polynomial regression. 
	A large body of literature exists on so-called optimal design of experiments~\cite{pukelsheim2006optimal}. The specific problem of minimizing the determinant is referred to as a D-optimal design in the mathematical statistics community. Two key specificities here are that i) no intercept is considered and ii) pilots should belong to the domain $[0,1]$. These constraints are considered in~\cite[Theor.~1]{huang1995d}, which we tailored to the formalism of this paper.
\end{proof}

Since 1 always is a root of (\ref{eq:root_prob}), it is optimal to set unit magnitude to $N/L$ pilots. For other sets of pilots, roots of $P_L'(2t-1)$ should be found. For orders $L\leq 10$, closed-form expressions can be found easily. 
Indeed, the roots of Legendre polynomial derivatives $P_L'(t)$ come in positive/negative pairs and there is always one in zero for even $L$, \textit{i.e.}, one root in $1/2$ for $P_L'(2t-1)$. In any case, these roots can be pre-computed. 


The optimal design of Theorem~\ref{theor:main} provides an additional global optimality guarantee on the total regression range. To introduce it, let us first consider the PA predicted response for a given input value $\tilde{s}$. Defining $\tilde{\vect{\phi}}(\tilde{s})=(\tilde{s}, ..., \tilde{s}|\tilde{s}|^{L-1}
)^T$, the prediction error variance or \gls{mse} is then given by~(\ref{eq:predicted_C_LS}) particularized for $\tilde{N}=1$
\begin{align*}
	\mathrm{MSE}_{\mathrm{LS}}(\tilde{s})&=\sigma^2\tilde{\vect{\phi}}(\tilde{s})^H(\mat{\Phi}^H\mat{\Phi})^{-1}\tilde{\vect{\phi}}(\tilde{s})
\end{align*}
and the maximal \gls{mse} over the regression range, for a given training matrix $\mat{\Phi}$, is defined as
\begin{align*}
	d_{\mathrm{LS}}(\mat{\Phi})=\max_{\tilde{s}, |\tilde{s}|\leq 1} \mathrm{MSE}_{\mathrm{LS}}(\tilde{s}).
\end{align*}
\begin{corollary}[Minimax MSE] \label{corol:minimax}
	The optimal design of Theor.~\ref{theor:main} also minimizes the maximal prediction MSE over the PA input range, \textit{i.e.}, it is the solution of
	\begin{align*}
		\min_{\substack{s_0,...,s_{N-1} \\0\leq |s_n| \leq 1,\ \forall n}} d_{\mathrm{LS}}(\mat{\Phi})
	\end{align*}
	achieving $d_{\mathrm{LS}}(\mat{\Phi})=\sigma^2L/N$ and thus bounding the prediction error variance as $\mathrm{MSE}_{\mathrm{LS}}(\tilde{s})\leq \sigma^2L/N$.
\end{corollary}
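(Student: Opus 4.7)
My plan is to identify this minimax statement as the classical \emph{G-optimality} companion to D-optimality and invoke the Kiefer–Wolfowitz equivalence theorem. The hypothesis that $N$ is a multiple of $L$ is exactly what makes the discrete D-optimal design of Theorem~\ref{theor:main} coincide with the optimal continuous design measure, so no approximation step is needed.

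First I would reduce to a real problem. Using Lemma~\ref{lemma:arbitrary_phase}, the pilot phases do not affect $\mat{\Phi}^H\mat{\Phi}$, so we may take $s_n\in[0,1]$. For the prediction point, $\tilde{\vect{\phi}}(\tilde{s})=e^{j\angle\tilde s}\,(|\tilde s|,|\tilde s|^2,\dots,|\tilde s|^L)^T$, and the phase cancels in the Hermitian form $\tilde{\vect{\phi}}(\tilde s)^H(\mat{\Phi}^H\mat{\Phi})^{-1}\tilde{\vect{\phi}}(\tilde s)$. Hence $\mathrm{MSE}_{\mathrm{LS}}(\tilde s)$ depends only on $t=|\tilde s|\in[0,1]$, and the maximum over $|\tilde s|\le 1$ equals the maximum over $t\in[0,1]$. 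The problem is now a standard no-intercept polynomial regression on $[0,1]$ with regressor $\vect{f}(t)=(t,t^2,\dots,t^L)^T$.

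Second, I would rewrite everything in terms of the normalized design measure $\xi^*$ placing mass $1/L$ on each of the $L$ support points $t_l$ of Theorem~\ref{theor:main}. The normalized information matrix is $\bar{\mat{M}}(\xi^*)=\tfrac{1}{N}\mat{\Phi}^H\mat{\Phi}$ and the normalized prediction variance is $\bar d(t,\xi^*)=\vect{f}(t)^T\bar{\mat{M}}(\xi^*)^{-1}\vect{f}(t)=N\,\mathrm{MSE}_{\mathrm{LS}}(t)/\sigma^2$. By the Kiefer–Wolfowitz general equivalence theorem (see, e.g., \cite{pukelsheim2006optimal}) for a regression model with $L$ parameters, a design $\xi^*$ is D-optimal if and only if it is G-optimal, and at such a design
\begin{align*}
\max_{t\in[0,1]} \bar d(t,\xi^*)=L.
\end{align*}
Since Theorem~\ref{theor:main} establishes that our design is D-optimal, it is therefore also G-optimal, i.e.\ it minimizes $\max_{t\in[0,1]}\bar d(t,\xi)$, which is equivalent to minimizing $d_{\mathrm{LS}}(\mat{\Phi})$. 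Translating the equality back through $\bar d=N\,\mathrm{MSE}_{\mathrm{LS}}/\sigma^2$ yields $d_{\mathrm{LS}}(\mat{\Phi})=\sigma^2 L/N$, which is simultaneously the minimum and a pointwise upper bound on $\mathrm{MSE}_{\mathrm{LS}}(\tilde s)$.

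The only delicate point is invoking the equivalence theorem in its discrete (exact) form rather than its continuous (approximate) form: in general the G/D equivalence is stated for probability-measure designs, and a discrete $N$-run design only realizes it when the optimal measure has rational atoms compatible with $N$. Here Theorem~\ref{theor:main} gives uniform weights $1/L$ on $L$ points, and the assumption $L\mid N$ means $\xi^*$ is exactly realizable by an $N$-run design, so the equivalence is tight without any rounding argument. This is the step I would flag as the main subtlety; the remainder is essentially bookkeeping through the $\sigma^2/N$ normalization.
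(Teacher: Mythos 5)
Your proposal is correct and follows essentially the same route as the paper, which proves the corollary by invoking the Kiefer--Wolfowitz equivalence theorem (D-optimality $\Leftrightarrow$ G-optimality with maximal normalized variance equal to $L$), particularized to this setting. You merely spell out the details the paper leaves implicit---the phase reduction, the normalized design measure, the $\sigma^2/N$ bookkeeping, and the exact realizability guaranteed by $L\mid N$---all of which are sound.
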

\begin{proof}
	The proof directly comes from the Kiefer-Wolfowitz equivalence theorem~\cite{kiefer_equivalence_1960,pukelsheim2006optimal}, particularized to our case.
\end{proof}

\subsection{Linear Minimum Mean Squared Estimator}

Unfortunately, we do not have the expression of the optimal training matrix in the LMMSE case. We can still put forward certain positive aspects about using the LS optimal design for the LMMSE estimator. In general, given that $\mat{C}_{\mathrm{LS}}\succeq\mat{C}_{\mathrm{LMMSE}}$, we also have that $|\mat{C}_{\mathrm{LS}}|\geq |\mat{C}_{\mathrm{LMMSE}}|$. Hence minimizing $|\mat{C}_{\mathrm{LS}}|$ corresponds to minimizing an upper bound on $|\mat{C}_{\mathrm{LMMSE}}|$, which is generally beneficial. Given the improved performance, we also have the guarantee that the maximal prediction MSE $d_{\mathrm{LMMSE}}(\mat{\Phi})$ is lower or equal than $\sigma^2L/N$. Moreover, given that the matrix $\mat{\Phi}^H\mat{\Phi}$ grows with $N$, it makes sense to normalize it by $1/N$ so that the error covariance matrix becomes
\begin{align*}
	\tilde{\mat{C}}_{\mathrm{LMMSE}}&=\frac{\sigma^2}{N}\tilde{\mat{\Phi}}\left(\frac{1}{N}\mat{\Phi}^H\mat{\Phi}+\frac{\sigma^2}{N}\mat{C}_{\vect{\beta}}^{-1}\right)^{-1}\tilde{\mat{\Phi}}^H.
\end{align*}
It is then clear that, as $\sigma^2/N\rightarrow 0$, the two estimators (and their performance) $\hat{\vect{\beta}}_{\mathrm{LMMSE}}$ and $\hat{\vect{\beta}}_{\mathrm{LS}}$ converge provided that $\mat{\Phi}$ is full rank (thus $N\geq L$). In other words, when the noise variance is small and/or the number of pilots is large, the prior information can be neglected, which intuitively makes sense~\cite{chaloner_bayesian_1995}. This implies that the optimal designs of Section~\ref{subsection:training_design_LS} are asymptotically optimal in the LMMSE case as $\sigma^2/N\rightarrow 0$. 

\section{Simulation Results}\label{section:simulation_results}

In the following, as a benchmark, a uniform pilot allocation is considered, defined as $s_n=1/N+n/N$ for $n=0,...,N-1$.

\subsection{Least Squares Estimator}

\begin{figure}[t!]
	\centering 
	\resizebox{0.95\linewidth}{!}{%
		{\small \input{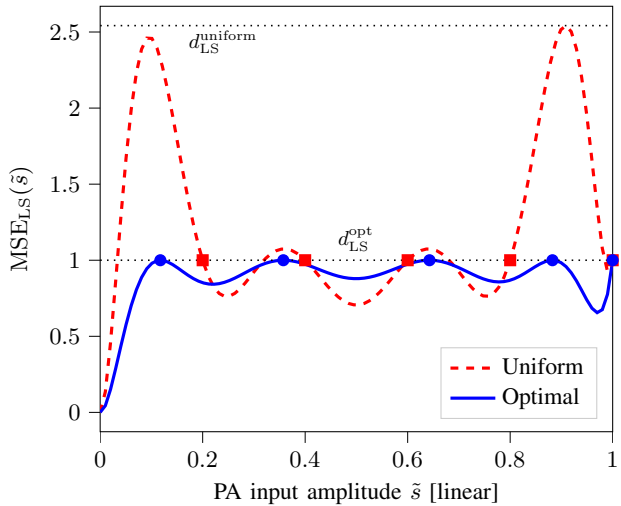}}
	} 
	\vspace{-0.5em}
	\caption{LS prediction (reconstruction) MSE for a uniform and optimal allocation for $L=N=5$ and $\sigma^2=1$. Pilot locations are represented by rectangles and circles respectively.}
	\label{fig:Fig_1_KWT} 
	\vspace{-1em}
\end{figure}

Fig.~\ref{fig:Fig_1_KWT} illustrates the result of Corol.~\ref{corol:minimax} for a fifth polynomial order ($L=5$), $\sigma^2=1$ and $N=L=5$ pilots. The optimal one performs sensibly better than a naive uniform allocation
. Quantitatively, the maximal MSE is more than two times reduced by the optimal allocation.

\begin{figure}[t!]
	\centering 
	\resizebox{0.95\linewidth}{!}{%
		{\small 
\begin{tikzpicture}

\definecolor{darkgray176}{RGB}{176,176,176}
\definecolor{lightgray204}{RGB}{204,204,204}

\begin{axis}[
legend cell align={left},
legend style={
  fill opacity=0.8,
  draw opacity=1,
  text opacity=1,
  at={(0.03,0.97)},
  anchor=north west,
  draw=lightgray204
},
tick align=outside,
tick pos=left,
x grid style={darkgray176},
xlabel={Polynomial order \(\displaystyle L\)},
xmin=1, xmax=8,
xtick style={color=black},
y grid style={darkgray176},
ylabel={Performance gain \(\displaystyle d_{\mathrm{LS}}^{\mathrm{uniform}}/d_{\mathrm{LS}}^{\mathrm{opt}}\)},
ymin=0.0128223518181799, ymax=21.7307306118182,
ytick style={color=black}
]
\addplot [very thick, blue]
table {%
1 1
2 1
3 1.17576319213973
4 1.62956905084786
5 2.54201632831145
6 4.48741855721267
7 9.10700131015958
8 20.7435529636364
};
\addlegendentry{Uniform}
\addplot [very thick, black, dashed]
table {%
1 1
2 1
3 1
4 1
5 1
6 1
7 1
8 1
};
\addlegendentry{No (unit) gain}
\end{axis}

\end{tikzpicture}}
	} 
	\vspace{-0.5em}
	\caption{Performance gain of using the optimal versus uniform allocation in terms of ratio of maximal prediction MSE.}
	\label{fig:Fig_2_d_C_as_function_L} 
	\vspace{-1em}
\end{figure}
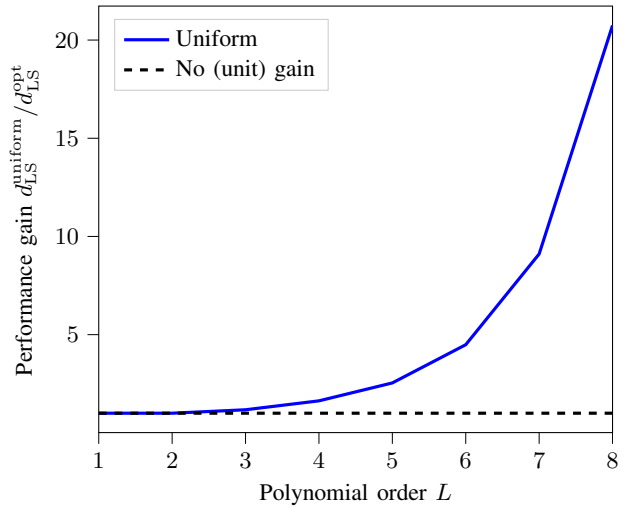

For $L=1$ and $L=2$, the optimal support points are $[1]$ and $[1/2,1]$ and the uniform allocation coincides with the optimal one. As $L$ increases, the performance gap between the uniform allocation and the optimal one gets larger. Fig.~\ref{fig:Fig_2_d_C_as_function_L} illustrates this by plotting the ratio of the maximal prediction MSE $d_{\mathrm{LS}}$ (maximal MSE value in Fig.~\ref{fig:Fig_1_KWT}) for $L=N$. As a reminder, from Corol.~\ref{corol:minimax}, we simply have $d_{\mathrm{LS}}=\sigma^2$ for $N=L$ when using the optimal allocation, \textit{i.e.}, it remains constant. In other words, the alternations of the blue continuous curve in Fig.~\ref{fig:Fig_1_KWT} will never be over $1$, even as $L$ increases. On the other hand, for the uniform one, the symmetric peaks on the left and right of the graphs will continue to increase as $L$ grows. This explains the large improvement observed in Fig.~\ref{fig:Fig_2_d_C_as_function_L}.


\subsection{Linear Minimum Mean Squared Error Estimator}\label{subsection:LMMSE}

The MSE of the LMMSE estimator $\mathrm{MSE}_{\mathrm{LMMSE}}$ does not scale linearly with $\sigma^2$ so that the performance gap with the LS estimator $\mathrm{MSE}_{\mathrm{LS}}$ will generally depend on the SNR which we define as $\mathrm{SNR}={P_{\mathrm{max}}}/{(N\sigma^2)}$. In simulations, we set $P_{\mathrm{max}}=1$ and $N=L$. To determine the a priori statistics of the LMMSE $\bar{\vect{\beta}}$ and $\mat{C}_{\vect{\beta}}$, we consider that the PA follows a random Rapp model without phase distortion 
so that
\begin{align*}
	f(s)&=\frac{Gs}{\left(1+\left(\frac{Gs}{V_{\mathrm{sat}}}\right)^{2S}\right)^{1/2S}}
\end{align*}
where $G\sim \mathcal{N}(1,0.01)$, $V_{\mathrm{sat}}\sim \mathcal{N}(1,0.01)$ and $S\sim \mathcal{N}(2,0.1)$ are the linear gain, the saturation voltage and the smoothness, respectively. We generate $M=100$ realizations of the PA response, for which confidence intervals are shown in Fig.~\ref{fig:Fig_3_Rapp_variations}. We then fit a $L=7$ polynomial order to obtain coefficients $\vect{\beta}_m$ for each realization $m$. As can be seen in the figure, this polynomial approximation is very accurate and cannot be distinguished at first sight from the exact Rapp model response. As discussed in Section~\ref{section:transmission_model}, in practice, the channel $h$, assumed narrowband, will multiply these coefficients. This induces an attenuation $|h|$, which we take into account through the scaling of the noise level $\sigma^2$ but also a phase rotation $e^{\jmath \angle h}$. Depending if this phase is known, \textit{e.g.}, from previous measurements or as part of the calibration process, or not, we consider two cases. i) \textbf{Coherent} case where it is known and can be compensated so that we can estimate the coefficients mean and covariance as $\bar{\vect{\beta}}=1/M\sum_{m=0}^{M-1}\vect{\beta}_m$ and $\mat{C}_{\vect{\beta}}=1/M\sum_{m=0}^{M-1}\vect{\beta}_m\vect{\beta}_m^H-\bar{\vect{\beta}}\bar{\vect{\beta}}^T$. ii) \textbf{Noncoherent} case where phase $\angle h$ is unknown, assumed uniformly distributed in $[0,2\pi]$, causing coefficients to have a zero mean so that $\bar{\vect{\beta}}=\vect{0}$ and $\mat{C}_{\vect{\beta}}=1/M\sum_{m=0}^{M-1}\vect{\beta}_m\vect{\beta}_m^H$.

\begin{figure}[t!]
	\centering 
	\resizebox{0.95\linewidth}{!}{%
		{\small 
\begin{tikzpicture}

\definecolor{darkgray176}{RGB}{176,176,176}
\definecolor{lightgray204}{RGB}{204,204,204}
\definecolor{steelblue31119180}{RGB}{31,119,180}

\begin{axis}[
legend cell align={left},
legend style={
  fill opacity=0.8,
  draw opacity=1,
  text opacity=1,
  at={(0.03,0.97)},
  anchor=north west,
  draw=lightgray204
},
tick align=outside,
tick pos=left,
x grid style={darkgray176},
xlabel={PA input amplitude \(\displaystyle |s|\)},
xmin=0, xmax=1.5,
xtick style={color=black},
y grid style={darkgray176},
ylabel={PA output amplitude \(\displaystyle |f(s)|\)},
ymin=-0.0570135831521859, ymax=1.1972852461959,
ytick style={color=black}
]
\path [draw=blue!20, fill=blue!20]
(axis cs:0,0)
--(axis cs:0,0)
--(axis cs:0.0625,0.050532926388181)
--(axis cs:0.125,0.101081043073624)
--(axis cs:0.1875,0.151675458481041)
--(axis cs:0.25,0.202359019351808)
--(axis cs:0.3125,0.253173437302911)
--(axis cs:0.375,0.304130423091079)
--(axis cs:0.4375,0.355157357566701)
--(axis cs:0.5,0.406016059803783)
--(axis cs:0.5625,0.45621413028087)
--(axis cs:0.625,0.504953442732638)
--(axis cs:0.6875,0.551167726181463)
--(axis cs:0.75,0.593674521273359)
--(axis cs:0.8125,0.631413215299453)
--(axis cs:0.875,0.663686567446418)
--(axis cs:0.9375,0.69030311766854)
--(axis cs:1,0.711561284297259)
--(axis cs:1.0625,0.728103536382177)
--(axis cs:1.125,0.740728107468762)
--(axis cs:1.1875,0.750234901014731)
--(axis cs:1.25,0.757334610226465)
--(axis cs:1.3125,0.762613138179128)
--(axis cs:1.375,0.766530836617802)
--(axis cs:1.4375,0.769438444247499)
--(axis cs:1.5,0.771598153256813)
--(axis cs:1.5,1.14027166304372)
--(axis cs:1.5,1.14027166304372)
--(axis cs:1.4375,1.12807332588177)
--(axis cs:1.375,1.11385801036959)
--(axis cs:1.3125,1.09732651963693)
--(axis cs:1.25,1.07817229595669)
--(axis cs:1.1875,1.05610228304054)
--(axis cs:1.125,1.03086410950919)
--(axis cs:1.0625,1.00227381876888)
--(axis cs:1,0.970231546210171)
--(axis cs:0.9375,0.934707373643537)
--(axis cs:0.875,0.895685745655123)
--(axis cs:0.8125,0.853084055862362)
--(axis cs:0.75,0.806698628472892)
--(axis cs:0.6875,0.756237541005524)
--(axis cs:0.625,0.701447033295909)
--(axis cs:0.5625,0.642268531868977)
--(axis cs:0.5,0.578942061207122)
--(axis cs:0.4375,0.51200705056943)
--(axis cs:0.375,0.442206850262191)
--(axis cs:0.3125,0.370345265322722)
--(axis cs:0.25,0.297153888009676)
--(axis cs:0.1875,0.223208759514548)
--(axis cs:0.125,0.148903700465194)
--(axis cs:0.0625,0.0744665967792082)
--(axis cs:0,0)
--cycle;

\addplot [semithick, black, forget plot]
table {%
0 0
0.0625 0.0624997615836946
0.125 0.124992371769409
0.1875 0.187442108997795
0.25 0.249756453680742
0.3125 0.311759351312817
0.375 0.373168636676635
0.4375 0.433582204068066
0.5 0.492479060505452
0.5625 0.549241331074924
0.625 0.603200238014273
0.6875 0.653702633593493
0.75 0.700186574873126
0.8125 0.742248635580908
0.875 0.77968615655077
0.9375 0.812505245656038
1 0.840896415253715
1.0625 0.865188677575529
1.125 0.885796108488975
1.1875 0.903168592027633
1.25 0.917753453091576
1.3125 0.929969828908027
1.375 0.940194423493694
1.4375 0.948755885064632
1.5 0.955934908150266
};
\addplot [very thick, black]
table {%
0 0
0.0625 0.0624997615836946
0.125 0.124992371769409
0.1875 0.187442108997795
0.25 0.249756453680742
0.3125 0.311759351312817
0.375 0.373168636676635
0.4375 0.433582204068066
0.5 0.492479060505452
0.5625 0.549241331074924
0.625 0.603200238014273
0.6875 0.653702633593493
0.75 0.700186574873126
0.8125 0.742248635580908
0.875 0.77968615655077
0.9375 0.812505245656038
1 0.840896415253715
1.0625 0.865188677575529
1.125 0.885796108488975
1.1875 0.903168592027633
1.25 0.917753453091576
1.3125 0.929969828908027
1.375 0.940194423493694
1.4375 0.948755885064632
1.5 0.955934908150266
};
\addlegendentry{$G=1,V_{\mathrm{sat}}=1,S=2$}
\addplot [very thick, steelblue31119180, mark=x, mark size=3, mark options={solid}, only marks]
table {%
0 0
0.0625 0.0627426128547217
0.125 0.125069703312776
0.1875 0.187320896194187
0.25 0.249566133611198
0.3125 0.311642868736978
0.375 0.373198848359038
0.4375 0.433738778925072
0.5 0.492673170788931
0.5625 0.549367655364456
0.625 0.603191069894869
0.6875 0.653560604545455
0.75 0.69998230652723
0.8125 0.742085235959329
0.875 0.779647568177814
0.9375 0.812612937198621
1 0.841095315042363
1.0625 0.865370721628705
1.125 0.885854059948015
1.1875 0.903059371218019
1.25 0.917541804733172
1.3125 0.92981959711444
1.375 0.940274355667249
1.4375 0.949027940555268
1.5 0.955794240497771
};
\addlegendentry{Polyn. approx. $L=7$ (no noise)}
\addlegendimage{area legend, draw=blue, fill=blue, opacity=0.1}
\addlegendentry{Conf. interval ($\pm 2$ st.dev.)}
\end{axis}

\end{tikzpicture}}
	} 
	\vspace{-0.5em}
	\caption{Random power amplifier response over 100 realizations.}
	\label{fig:Fig_3_Rapp_variations} 
	\vspace{-1em}
\end{figure}
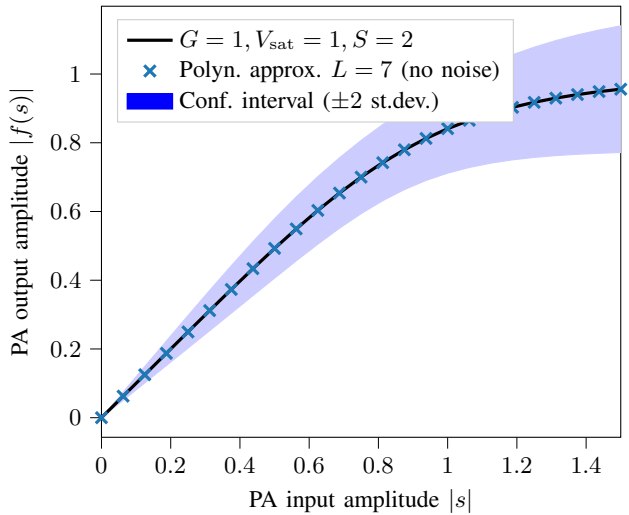

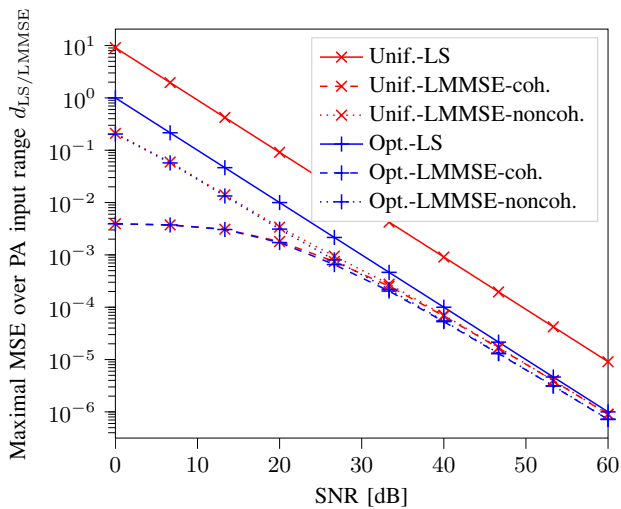
\begin{figure}[t!]
	\centering 
	\resizebox{0.95\linewidth}{!}{%
		{\small 
\begin{tikzpicture}

\definecolor{darkgray176}{RGB}{176,176,176}
\definecolor{lightgray204}{RGB}{204,204,204}

\begin{axis}[
legend cell align={left},
legend style={fill opacity=0.8, draw opacity=1, text opacity=1, draw=lightgray204},
log basis y={10},
tick align=outside,
tick pos=left,
x grid style={darkgray176},
xlabel={SNR [dB]},
xmin=0, xmax=60,
xtick style={color=black},
y grid style={darkgray176},
ylabel={Maximal MSE over PA input range \(\displaystyle d_{\mathrm{LS/LMMSE}}\)},
ymin=3.16063427288857e-07, ymax=20.6345263366562,
ymode=log,
ytick style={color=black}
]
\addplot [semithick, red, mark=x, mark size=3, mark options={solid}]
table {%
0 9.10700136065134
6.66666666666667 1.96204397160182
13.3333333333333 0.422709557033158
20 0.0910700147423995
26.6666666666667 0.0196204397482802
33.3333333333333 0.00422709559185819
40 0.000910700133337272
46.6666666666667 0.000196204393505095
53.3333333333333 4.22709555016887e-05
60 9.10700141851239e-06
};
\addlegendentry{Unif.-LS}
\addplot [semithick, red, dashed, mark=x, mark size=3, mark options={solid}]
table {%
0 0.00392725444723065
6.66666666666667 0.00374329049724678
13.3333333333333 0.00309132979132589
20 0.00180116726976876
26.6666666666667 0.000747317073991463
33.3333333333333 0.000252712933222845
40 6.79308492025606e-05
46.6666666666667 1.66744030604474e-05
53.3333333333333 3.95758081006655e-06
60 8.96807073711477e-07
};
\addlegendentry{Unif.-LMMSE-coh.}
\addplot [semithick, red, dotted, mark=x, mark size=3, mark options={solid}]
table {%
0 0.209837745650506
6.66666666666667 0.0594216032926539
13.3333333333333 0.0140529707206627
20 0.00339062963034421
26.6666666666667 0.000947933763618188
33.3333333333333 0.000276529948635043
40 7.05817353364464e-05
46.6666666666667 1.70425147478319e-05
53.3333333333333 4.0051233401925e-06
60 9.03615625409714e-07
};
\addlegendentry{Unif.-LMMSE-noncoh.}
\addplot [semithick, blue, mark=+, mark size=3, mark options={solid}]
table {%
0 1.00000000554428
6.66666666666667 0.215443471212893
13.3333333333333 0.0464158879937173
20 0.00999999958857245
26.6666666666667 0.00215443472117194
33.3333333333333 0.00046415887551432
40 9.99999997256529e-05
46.6666666666667 2.15443470703361e-05
53.3333333333333 4.64158878698412e-06
60 9.99999997244958e-07
};
\addlegendentry{Opt.-LS}
\addplot [semithick, blue, dashed, mark=+, mark size=3, mark options={solid}]
table {%
0 0.00392540099827274
6.66666666666667 0.00373502798236322
13.3333333333333 0.00305917600975963
20 0.00172056678275776
26.6666666666667 0.000653562721748491
33.3333333333333 0.000203561498409615
40 5.30011383333686e-05
46.6666666666667 1.29328609843224e-05
53.3333333333333 3.10716711261623e-06
60 7.16132440983769e-07
};
\addlegendentry{Opt.-LMMSE-coh.}
\addplot [semithick, blue, dotted, mark=+, mark size=3, mark options={solid}]
table {%
0 0.204067264972424
6.66666666666667 0.0571897823469335
13.3333333333333 0.0133634035977825
20 0.00310879978591108
26.6666666666667 0.000802447302517828
33.3333333333333 0.00021923228082626
40 5.47704522059853e-05
46.6666666666667 1.32144586732076e-05
53.3333333333333 3.15135305336708e-06
60 7.23803554371995e-07
};
\addlegendentry{Opt.-LMMSE-noncoh.}
\end{axis}

\end{tikzpicture}}
	} 
	\vspace{-0.5em}
	\caption{Maximal prediction variance for LS/LMMSE estimators and uniform versus optimized training.}
	\label{fig:Fig_4_LS_LMMSE_comparison} 
	\vspace{-1em}
\end{figure}

A performance comparison is given in Fig.~\ref{fig:Fig_4_LS_LMMSE_comparison} showing the maximal prediction variance $d_{\mathrm{LS/LMMSE}}(\mat{\Phi})$ (for the LS or LMMSE estimator) as a function of the SNR. A few important observations can be made. i) Using the optimized allocation of Theor.~\ref{theor:main} provides most gain as compared to the uniform one when using the LS estimator (factor 10 in accordance to Fig.~\ref{fig:Fig_2_d_C_as_function_L}) while the improvement is negligible when using the LMMSE estimator. ii) At low SNR, the LMMSE estimator has a large gain as compared to the LS estimator given the prior information, especially in the coherent case where even more prior information is available: factor 300 (coherent) versus 5 (noncoherent) at 0 dB SNR. At high SNR, their performance converges to the LS one, as expected from Section~\ref{subsection:LMMSE}. iii) Interestingly, even at 60 dB SNR, when using the uniform allocation, the LMMSE estimator has not yet converged to the LS one. It achieves a similar performance as the LS/LMMSE one with optimized training. This means that, when using the LMMSE estimator, a simpler uniform allocation would provide close-to-optimal solution and can be considered sufficient.

%

\section{Conclusion}\label{section:conclusion}

In this work, we considered OTA PA calibration using estimation theory. The LS and LMMSE estimators were rederived. The optimal training in the LS case was obtained by minimizing the generalized variance of the error covariance matrix. This optimal design also minimizes the maximal MSE when reconstructing the PA over its full input range. Via simulations, we show an improvement of a factor 10 for a $L=7$ PA order. Leveraging prior information, the LMMSE estimator provides an additional gain of a factor 5 and 300 in the noncoherent and coherent cases respectively.

 \section*{Acknowledgment}
This research was partially funded by 6GTandem, supported by the Smart Networks and Services Joint Undertaking (SNS JU) under the European Union’s Horizon Europe research and innovation programme under Grant Agreement No 101096302.

\scriptsize 
\bibliographystyle{IEEEtran}
\bibliography{IEEEabrv,IEEEreferences}

\end{document}